%
%
%

\documentclass[11pt]{article}

\usepackage{amsmath}
\usepackage{amsthm}
\usepackage{amsfonts}
\usepackage{amssymb}
\usepackage{geometry}
\geometry{a4paper}
%
\usepackage[matrix,arrow,curve]{xy}



\newcommand{\Fq}{\mathbb{F}_q}
\newcommand{\Fm}{\mathbb{F}_{q^m}}
\newcommand{\Fn}{\mathbb{F}_{q^n}}
\newcommand{\F}{\mathbb{F}}
\newcommand{\C}{\mathcal C}
\DeclareMathOperator*{\rk}{rk}
\newcommand{\pr}{\mathrm{Pr}}

\newtheorem{assumption}{Assumption}
\newtheorem{definition}{Definition}
\newtheorem{theorem}{Theorem}
\newtheorem{proposition}{Proposition}
\newtheorem{lemma}{Lemma}

\begin{document}

\title{Fuzzy Authentication using Rank Distance}
\author{Alessandro Neri, Joachim Rosenthal and Davide Schipani\thanks{Institute of Mathematics, University of Zurich, Switzerland}}

%
%
\maketitle

\abstract{
Fuzzy authentication allows authentication based on the fuzzy matching of two objects, for example based on the similarity of two strings in the Hamming metric, or on the similiarity of two sets in the set difference metric. Aim of this paper is to show other models and algorithms of secure fuzzy authentication, which can be performed using the rank metric. A few schemes are presented which can then be applied in different scenarios and applications. 
	\vspace{1cm}
}

\section{Introduction}
\label{sec:1}

Recent years have seen a lot of research around the problem of authentication using approximate matching under a certain metric of similarity, while still enabling a secure storage of sensible authentication data. The typical, but not the only scenario, where such a system is needed, is in the use of biometric features, like fingerprints, for authentication purposes.

Several models have been proposed that may be more appropriate for different applications. For example the fuzzy commitment scheme \cite{ju99} models data as bit strings and compares strings in the Hamming metric; the fuzzy vault \cite{ju02} models data as sets of elements and compares sets in the set difference metric.

In this paper we present fuzzy authentication schemes using the rank metric by generalizing the schemes mentioned above for other model scenarios and highlighting possible applications. The structure of the paper is the following. Section \ref{sec:3} recalls some mathematical concepts and definitions concerning rank metric codes and linearized polynomials. Section \ref{sec:2} presents the fuzzy commitment scheme in the rank distance, a model whereby the tolerance needed in the authentication is not based on the number of different bits between two strings but on the similarity of two matrices, more precisely on the rank of their difference. Section \ref{sec:4} is devoted to a fuzzy vault scheme using linearized polynomials, which relates the set difference with the rank metric. The scheme is an alternative to the standard fuzzy vault based on Reed-Solomon decoding.
Section \ref{sec:5} gives hints on possible applications and model scenarios of the schemes presented in the previous sections.

\section{Rank Metric Codes and Linearized Polynomials}
\label{sec:3}

Let $q$ be a prime power and let $\Fq$ denote the finite field with $q$ elements.
Recall that 
 $\Fm$ is isomorphic (as
a vector space over $\Fq$) to the vector space $\Fq^m$.
One then easily obtains the isomorphic description of matrices over
the base field $\F_q$ as vectors over the extension field, i.e.\
$\Fq^{m\times n}\cong \Fm^n$.

\begin{definition}
  The \emph{rank distance} $d_R$ on $\Fq^{m\times n}$ is defined by
  \[d_R(X,Y):= \rk(X-Y) , \quad X,Y \in \Fq^{m\times n}. \]
  In the same way it is possible to define the rank distance between two elements
  $\boldsymbol x,\boldsymbol y \in \Fm^n$ as the rank of the
  difference of the respective matrix representations in
  $\Fq^{m\times n}$.
\end{definition}

A \emph{rank metric code} $\C$ is a subset of $\Fq^{m\times n}$ (or $\Fm^n$) equipped with the rank distance. The \emph{minimum distance} of a rank metric code $\C$ is the quantity
$$d_R(\C):=\min\left\{d_R(u,v) \mid u,v\in \C, u\neq v\right\}. $$
We can define special classes of rank metric codes introducing linearity. 
 An $\Fm$-linear rank metric code of dimension $k$ is a rank metric code that is also a $k$-dimensional subspace of the $\Fm$-vector space $\Fm^n$. 
An $\Fq$-linear rank metric code of dimension $k'$ is a rank metric code that is also a $k'$-dimensional subspace of the $\Fq$-vector space
$\Fm^n\cong \Fq^{m\times n}$.

Observe that an $\Fm$-linear rank metric code of dimension $k$ is also an $\Fq$-linear code of dimension $mk$.

We will use the notation $[n,k,d]$-code  for a $k$-dimensional $\Fm$-linear code with minimum distance $d$, and $[nm,k',d']$-code  for a $k'$-dimensional $\Fq$-linear code with minimum distance $d'$.

\begin{theorem}[Singleton-like Bound]\label{th:SB}
Let $\C\subseteq \Fq^{m\times n}$ be a rank metric code. Then
$$|\C|\leq \min\left\{q^{m(n-d+1)},q^{n(m-d+1)}\right\}. $$
\end{theorem}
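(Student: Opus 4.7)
The plan is to prove the two inequalities separately, and to observe first that they are related by a transposition symmetry. Transposing an $m\times n$ matrix over $\Fq$ preserves its rank, so the map $X\mapsto X^\top$ is a rank-distance isometry from $\Fq^{m\times n}$ to $\Fq^{n\times m}$. Consequently, if $\C\subseteq\Fq^{m\times n}$ has minimum rank distance $d$, then $\C^\top\subseteq\Fq^{n\times m}$ also has minimum rank distance $d$ and the same cardinality. Hence it suffices to establish one of the two bounds, say $|\C|\le q^{m(n-d+1)}$, and apply it to $\C^\top$ to obtain $|\C|=|\C^\top|\le q^{n(m-d+1)}$.

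For the remaining inequality I would use a puncturing argument analogous to the classical proof of the Singleton bound in the Hamming setting. Consider the projection
\[
\pi:\Fq^{m\times n}\longrightarrow \Fq^{m\times (n-d+1)}
\]
that retains only the first $n-d+1$ columns of a matrix. I claim that $\pi$ restricted to $\C$ is injective. Indeed, if $X,Y\in\C$ satisfy $\pi(X)=\pi(Y)$, then $X-Y$ has its first $n-d+1$ columns equal to zero, so every nonzero column of $X-Y$ lies among the last $d-1$ columns. Therefore $\rk(X-Y)\le d-1$, and since the minimum rank distance of $\C$ equals $d$ this forces $X-Y=0$, i.e.\ $X=Y$.

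Once injectivity of $\pi|_\C$ is established, the bound is immediate: the codomain $\Fq^{m\times(n-d+1)}$ has exactly $q^{m(n-d+1)}$ elements, hence $|\C|\le q^{m(n-d+1)}$. There is no real obstacle in this argument; the only subtlety worth noting is that no linearity assumption on $\C$ is needed, since the injectivity conclusion uses only the minimum-distance hypothesis applied to the pair $(X,Y)$. Combining this with the transpose symmetry yields the stated bound.
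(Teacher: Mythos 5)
Your proof is correct. The paper itself does not prove this theorem; it only cites Gabidulin (1985) and Roth (1991, Theorem 1), so there is no in-paper argument to compare against line by line. Your argument is the standard one for this bound: the transpose map $X\mapsto X^{\top}$ is a rank isometry, reducing the two inequalities to one, and the column-puncturing map $\pi:\Fq^{m\times n}\to\Fq^{m\times(n-d+1)}$ is injective on $\C$ because a matrix supported on at most $d-1$ columns has rank at most $d-1$. Both steps are sound, the observation that no linearity is needed is accurate, and the implicit requirement $n-d+1\geq 0$ is automatic since $d\leq\min\{m,n\}$ for any code with at least two elements. This is essentially the argument found in the cited references, so your write-up usefully makes the paper self-contained without deviating from the expected route.
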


\begin{proof}
See \cite{ga85}, or \cite[Theorem 1]{ro91}.
\end{proof}

\begin{definition}
Codes attaining the Singleton-like bound are called \emph{Maximum Rank Distance (MRD) Codes}.
\end{definition}

When $n\leq m$ a class of codes attaining the Singleton-like bound was first proposed in \cite{ga85} and then generalized in \cite{ks05}. These codes are $\Fm$-linear rank metric codes.
Let $(v_1,\dots, v_n) \in \Fm^n$ be a vector, we denote the $k
\times n$ \emph{$s$-Moore matrix} by
\[M_{s,k}(v_1,\dots, v_n) := \left( \begin{array}{cccc} v_1 & v_2
    &\dots &v_n \\ v_1^{[s]} & v_2^{[s]} &\dots &v_n^{[s]} \\
    \vdots&&&\vdots \\ v_1^{[s(k-1)]} & v_2^{[s(k-1)]} &\dots
    &v_n^{[s(k-1)]} \end{array}\right) ,\] where $[i]:= q^i$.
\begin{definition}\label{def:Gab}
  Let $g_1,\dots, g_n \in \Fm$ be linearly independent over
  $\F_q$ and let $s$ be coprime to $m$. We define a \emph{generalized
    Gabidulin code} $\mathcal{C}\subseteq \Fm^{n}$ of dimension
  $k$ as the linear block code with generator matrix
  $M_{s,k}(g_1,\dots, g_n)$.  Using the isomorphic matrix
  representation we can interpret $\mathcal{C}$ as a matrix code in
  $\Fq^{m\times n}$.
\end{definition}

These codes are optimum for rank distance, since they are $[n,k,n-k+1]$-codes. Moreover, for this class of codes there exist
polynomial-time decoding algorithms decoding up to their
error-correcting capability $t=\left\lfloor\frac{n-k}{2}\right\rfloor$,  \cite{lo06, ri04, si09}.

Observe that when $s=1$, this definition of Gabidulin codes is the $q$-analog of Reed-Solomon codes with the Hamming distance. Here, a set of distinct elements is replaced by a set of linearly independent elements, and the power $g_i^j$ is replaced by the Frobenius power $g_i^{[j]}$. Reed-Solomon codes can also be seen as evaluation of polynomials of degree less than $k$ in $n$ distinct
points. We can give a $q$-analog of this interpretation for Gabidulin codes, as evaluation of linearized polynomials in $n$ linearly independent elements.

\begin{definition}
A \emph{linearized polynomial} over $\Fm$ is a polynomial $f(x)\in \Fm[x]/(x^{q^m}-x)$ of the form
$$\sum_{i=0}^{m-1}f_ix^{[i]}.$$
We denote by $\mathcal L_m(\Fm)$ the space of linearized polynomials over $\Fm$.
\end{definition}

Let $\mathcal G_{k,s} \subseteq \mathcal L_m(\Fm)$ be the set defined as
$$ \mathcal G_{k,s}:= \left\{f_0x+f_1x^{[s]}+\ldots+f_{k-1}x^{[s(k-1)]} \mid f_i\in \Fm\right\}.$$

\begin{proposition}
Let  $g_1,\dots, g_n \in \Fm$ be linearly independent over
  $\F_q$ and let $s$ be an integer coprime to $m$. Let moreover $\C$ be the Generalized Gabidulin code whose generator matrix is  $M_{s,k}(g_1,\dots, g_n)$. Then 
$$\C= \left\{(f(g_1),f(g_2),\ldots,f(g_n))\mid f\in \mathcal G_{k,s}\right\}.$$
\end{proposition}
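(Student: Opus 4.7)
The plan is to verify the equality by unpacking the definition of a generator matrix and then recognizing each entry of a resulting codeword as a polynomial evaluation. Since no decoding or distance property is invoked in the statement, the argument should be essentially a direct computation; the hypotheses on $s$ and on the $g_i$ are only needed to place us in the setting of Definition~\ref{def:Gab}, not for the set equality itself.

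First, I would write an arbitrary element of $\C$ as an $\Fm$-linear combination of the rows of $M_{s,k}(g_1,\dots,g_n)$, that is, as
\[
(f_0,f_1,\ldots,f_{k-1})\cdot M_{s,k}(g_1,\dots,g_n)
\]
for some $(f_0,\ldots,f_{k-1})\in\Fm^{k}$. Reading off the $j$-th coordinate of this row vector gives $\sum_{i=0}^{k-1} f_i\, g_j^{[si]}$, which is exactly $f(g_j)$ for the linearized polynomial
\[
f(x) := f_0 x + f_1 x^{[s]} + \cdots + f_{k-1} x^{[s(k-1)]}\in\mathcal{G}_{k,s}.
\]
Hence every codeword of $\C$ is of the form $(f(g_1),\ldots,f(g_n))$ with $f\in\mathcal{G}_{k,s}$.

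For the converse inclusion I would simply run the computation backwards: given $f\in\mathcal{G}_{k,s}$ with coefficient vector $(f_0,\ldots,f_{k-1})$, the tuple $(f(g_1),\ldots,f(g_n))$ is by the same identity the $\Fm$-linear combination of the rows of $M_{s,k}(g_1,\dots,g_n)$ with weights $f_i$, and therefore lies in $\C$. The assignment $(f_0,\ldots,f_{k-1})\mapsto f$ is clearly a bijection $\Fm^{k}\to\mathcal{G}_{k,s}$, so the two parametrizations describe the same set.

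I do not expect a genuine obstacle in this proof: the entire content is the observation that evaluating a linearized polynomial from $\mathcal{G}_{k,s}$ at $g_j$ is the same operation as pairing a coefficient vector with the $j$-th column of the $s$-Moore matrix. The only point where one should stay careful is matching indices correctly, so that the $i$-th coefficient $f_i$ is paired with $g_j^{[si]}$ (not $g_j^{[i]}$), which is why the polynomial lies in $\mathcal{G}_{k,s}$ and not in the full space $\mathcal{L}_m(\Fm)$.
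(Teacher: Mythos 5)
Your proof is correct and is exactly the routine computation the paper relies on (the paper states this proposition without proof, treating it as immediate): the $j$-th coordinate of $(f_0,\ldots,f_{k-1})\cdot M_{s,k}(g_1,\dots,g_n)$ is $\sum_{i=0}^{k-1}f_i g_j^{[si]}=f(g_j)$, giving both inclusions at once. Your side remarks — that the linear independence of the $g_i$ and the coprimality of $s$ and $m$ are not needed for the set equality itself, and that the index bookkeeping is the only place to be careful — are also accurate.
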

From now on we will write $\mathcal G_{k,s}(g_1,\ldots, g_n)$ for such a code.

For many years Gabidulin codes have been the only known MRD codes over $\Fm$. Recently some construction of non-Gabidulin MRD 
codes have been discovered \cite{co16, cr15}, but  many of these codes are not linear over $\Fm$. Some constructions of linear
non-Gabidulin MRD codes can be found in \cite{ho16} and as a special class of the codes presented in \cite{sh15}.

Although there are few known constructions of MRD codes, it was shown in \cite{ne16} that most linear rank metric codes are MRD and that the Gabidulin codes are only a small franction among the  MRD codes.

\section{Fuzzy Commitment Scheme with the Rank Distance}
\label{sec:2}

In 1999 Juels and Wattenberg \cite{ju99} proposed a fuzzy commitment scheme to allow fuzzy authentication with secure storage of biometric data in binary form. In \cite{sc10} the authors revisited the scheme in the setting of an arbitrary finite field by focusing on implementations and security concerns. In \cite{sc11} they proposed a dual version of the scheme, called fuzzy syndrome hashing, featuring some advantages in terms of security and use of iterative decoding. In \cite{sc12} they presented scenarios involving burst error correction and higher dimensional data.

Here we are going to describe a new fuzzy commitment scheme using the rank metric. 
In a following section about applications, we will describe a few scenarios where this scheme can be applied.

In our authentication model, we wish to consider two vectors $b,b' \in \Fm^n$  (or, equivalently, their matrix representations $B, B'\in \Fq^{m\times n}$) as belonging to the same person or entity as long as their rank distance is less than a certain predetermined threshold. And for security concerns we do not want to store vectors (or matrices) unencrypted.

Suppose now that we have a rank metric code $\C\subseteq \Fm^n$ whose minimum distance is $d=2t+1$ and assume there exists an efficient algorithm for decoding up to $t$ errors.

Let $h:\Fq^{m\times n} \rightarrow \Fq^{m\times r}$ be a collision resistant hash function, i.e. such that it is not feasible to compute an $u\in h^{-1}(v)$ for any $v \in \Fq^{m\times r}$. Observe that a hash function $h':\Fm^{n} \rightarrow \Fm^{r}$ can be defined starting from $h$, as the diagram

$$\xymatrix{\Fq^{m\times n}  \ar[r]^-{h} \ar[d] & \Fq^{m\times r} \ar[d] \\
\Fm^n \ar@{.>}[r]^{h'} \ar[u] & \Fm^r\ar[u]\\
}$$
shows.

As in the standard fuzzy commitment scheme, we select at random a codeword $c_b \in \C$ and we store the tuple
$$(l, h(c_b))$$
where $l=b-c_b$. 

This scheme is essentially 
the analogue of the standard fuzzy commitment with the difference that we use rank metric codes instead of Hamming codes.
Analogously as in \cite{sc10} one can show the following result.

\begin{theorem}
  If  $b \in \Fm^n$ can be chosen uniformly over the entire ambient space $\Fm^n$, then computing $ b \in \Fm^n$ from the stored data $(l,h(c_b))$ is computationally equivalent to invert the ‘restricted’ hash function 
 $$h_{|_\C}: \C \longrightarrow \Fm^r.$$ 
\end{theorem}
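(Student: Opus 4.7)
The plan is to establish the equivalence via two polynomial-time reductions in opposite directions. The easy direction shows that an oracle inverting $h_{|_\C}$ yields an algorithm recovering $b$; the harder direction shows a converse simulation, where the uniformity assumption on $b$ is crucial for the distribution of the simulated instance to match the genuine one.

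For the forward direction I would argue as follows. Suppose $\mathcal{A}$ is a polynomial-time algorithm that, given $w\in \Fm^r$ lying in the image of $h_{|_\C}$, returns some $c\in \C$ with $h(c)=w$. Given the stored pair $(l,h(c_b))$, run $\mathcal{A}$ on $h(c_b)$ to obtain a codeword $c'\in\C$ with $h(c')=h(c_b)$, and output $l+c'$. Since the codeword hidden in the commitment satisfies $b = l + c_b$, and any collision $c'\ne c_b$ would contradict collision resistance of $h$, we recover $b$ in polynomial time.

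For the reverse direction, suppose $\mathcal{B}$ is a polynomial-time algorithm that, given a genuinely distributed pair $(l, h(c_b))$, returns $b$. To invert $h_{|_\C}$ on an input $v = h(c)$ for some unknown $c\in\C$, I would sample $\tilde l\in\Fm^n$ uniformly at random, feed $(\tilde l, v)$ into $\mathcal{B}$ to obtain $\tilde b$, and output $\tilde b - \tilde l$. Correctness hinges on showing that $(\tilde l, v)$ is distributed identically to a genuine pair $(l,h(c_b))$ conditioned on $h(c_b)=v$. Here is where the uniformity of $b$ enters: since in the real scheme $c_b$ is chosen independently of $b$ and $b$ is uniform on $\Fm^n$, the shift $l=b-c_b$ is uniform on $\Fm^n$ and independent of $c_b$; in particular, conditional on $h(c_b)=v$, the marginal of $l$ is uniform, matching $\tilde l$. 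Therefore $\mathcal{B}$ outputs, with the same success probability as on real inputs, the value $\tilde b = \tilde l + c$, so that $\tilde b - \tilde l = c$ is a valid preimage of $v$ under $h_{|_\C}$.

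I expect the main obstacle to be a careful justification of the distributional equivalence in the second reduction. One must argue not only that the marginal of $l$ is uniform, but that the joint distribution of $(l, h(c_b))$ is correctly simulated by $(\tilde l, v)$. This step is the reason the hypothesis that $b$ ranges uniformly over the entire ambient space is essential: without it, knowledge of $l$ might leak information about $c_b$ beyond what is contained in $h(c_b)$, and the reduction would produce inputs that are distinguishable from those $\mathcal B$ was designed for, potentially destroying its success probability. Apart from this point, both reductions are essentially bookkeeping, and the collision resistance of $h$ ensures that any preimage returned actually equals the target codeword.
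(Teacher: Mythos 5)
Your proof is correct: the paper itself gives no argument for this theorem (it only points to the analogous result in \cite{sc10}), and your two-way reduction --- oracle for $h_{|_\C}$ plus collision resistance in one direction, and the simulation $(\tilde l, v)$ with the observation that $l=b-c_b$ is uniform and independent of $c_b$ precisely because $b$ is uniform on $\Fm^n$ in the other --- is exactly the standard argument the citation refers to. No gaps; the distributional matching you flag as the delicate point is handled correctly.
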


\section{A Linearized Polynomial Fuzzy Vault Scheme}
\label{sec:4}
The polynomial fuzzy vault (PFV) scheme was introduced in \cite{ju02} and allows fuzzy authentication in the set-difference metric.
In \cite{ma16} the authors proposed
a fuzzy vault scheme using codes in another metric, relating the set difference with the subspace distance on the set of Grassmanians.
The PFV scheme can also be generalized in a natural way using linearized polynomials and codes over the rank metric as follows.

First, we make the following assumption about the set of features used for authentication, both the set initially used to build the vault and the one submitted later for authentication.
\begin{assumption}\label{as:lin}
 Assume that the set of features ($A$ or $W$ in the following) is given by $n$ $\Fq$-linearly independent elements in $\Fn$, i.e. it is an $\Fq$-basis for $\Fn$.
\end{assumption}

This is usually not a restrictive assumption given the follwing result:
\begin{lemma}\label{lem:lin}
If the features 
are chosen with uniform distribution, then Assumption \ref{as:lin} is satisfied with probability 
$$ \prod_{i=0}^{n-1}\frac{(q^n-q^i)}{(q^n-i)}\geq \prod_{i=0}^{n-1}(1-q^{i-n}).$$
\end{lemma}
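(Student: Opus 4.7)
The plan is to interpret ``choosing features with uniform distribution'' as drawing an ordered sample of $n$ distinct elements uniformly from $\Fn$, since this is the interpretation that makes the stated fraction an actual probability. Under this interpretation, the denominator is the total number of ordered $n$-tuples of distinct elements of $\Fn$, namely
$$q^n(q^n-1)(q^n-2)\cdots(q^n-n+1) \;=\; \prod_{i=0}^{n-1}(q^n-i),$$
matching the denominator on the right-hand side.

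Next I would count the ordered $n$-tuples of $\Fq$-linearly independent elements of $\Fn$ by the standard greedy procedure. The first element can be any nonzero element of $\Fn$, giving $q^n-1 = q^n - q^0$ choices. Having already selected $j$ linearly independent elements, their $\Fq$-span has cardinality $q^j$, so the $(j+1)$-st element can be any of the $q^n - q^j$ elements outside this span. Multiplying these counts for $j=0,1,\ldots,n-1$ gives the numerator $\prod_{i=0}^{n-1}(q^n-q^i)$, and dividing by the total count above yields the desired probability
$$\prod_{i=0}^{n-1}\frac{q^n-q^i}{q^n-i}.$$

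For the lower bound, I would note termwise that $q^n - i \leq q^n$ for $0 \leq i \leq n-1$, whence
$$\frac{q^n-q^i}{q^n-i} \;\geq\; \frac{q^n-q^i}{q^n} \;=\; 1 - q^{i-n},$$
and taking the product over $i$ gives $\prod_{i=0}^{n-1}(1-q^{i-n})$.

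The argument is essentially a classical counting exercise, so there is no real obstacle; the only point worth flagging is the modelling choice in the first step, since the statement implicitly rules out the ``unordered $n$-subset'' or ``independent elements with repetition'' interpretations by the precise form of the denominator $\prod_{i=0}^{n-1}(q^n-i)$.
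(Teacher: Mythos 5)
Your proof is correct, and the counting is essentially the same as the paper's; the only difference is the sampling model. The paper draws an \emph{unordered} $n$-subset uniformly and computes the ratio (number of $\Fq$-bases as sets)$/\binom{q^n}{n} = \frac{\prod_{i=0}^{n-1}(q^n-q^i)/n!}{\prod_{i=0}^{n-1}(q^n-i)/n!}$, while you draw an ordered tuple of distinct elements; the two factors of $n!$ cancel, so both models yield exactly the displayed probability. Consequently, your closing remark that the form of the denominator ``implicitly rules out the unordered $n$-subset interpretation'' is mistaken --- that is in fact the interpretation the paper uses --- but this does not affect the validity of your argument. One small point in your favor: you actually verify the inequality $\frac{q^n-q^i}{q^n-i}\geq 1-q^{i-n}$ termwise (which is valid since $q^n-q^i>0$ and $0<q^n-i\leq q^n$ for $0\leq i\leq n-1$), whereas the paper leaves the lower bound unjustified.
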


\begin{proof}
 The number of $\Fq$-basis of $\Fn$ is $\frac{\prod_{i=0}^{n-1}(q^n-q^i)}{n!}$, while the number of subsets of $\Fn$ with cardinality $n$ is $\binom{q^n}{n}$.
\end{proof}

Now, let $\ell< n$  be two positive integers and let $0<s<n$ be another integer coprime with $n$. Let $(k_0,\ldots, k_{\ell-1})\in \Fn^{\ell}$ the secret key and
$\kappa(x)=k_0x+k_1x^{[s]}+\ldots+ k_{\ell-1}x^{[s(\ell-1)]}\in \mathcal L_n(\Fn)$ be the corresponding linearized polynomial. Consider a set of features $A=\left\{g_1,\ldots,g_n\right\}\subseteq \Fn$ given by $n$ $\Fq$-linearly independent elements.
Choose a random map $\lambda:\Fn\longrightarrow \Fn$ such that  $\lambda(x)\neq \kappa(x)$ for all $x\in B$, where $B= \Fn\smallsetminus A$.

Following the classical PFV scheme, we define the sets
\begin{align*}
\mathcal P_{auth} &  =  \left\{(x,\kappa(x)) \mid x \in A  \right\}, \\
\mathcal P_{chaff} & =  \left\{(x,\lambda(x)) \mid  x \in B \right\}, \\
\mathcal V & =  \mathcal P_{auth} \cup \mathcal P_{chaff}.
\end{align*}
$\mathcal P_{auth}$ is called \emph{set of authentic points}, $\mathcal P_{chaff}$ is the \emph{set of chaff points}, and $\mathcal V$  is called \emph{set of vault points}.

The last ingredients of the fuzzy vault scheme are the code
$$\C=\mathcal G_{\ell,s}(g_1,\ldots, g_n)$$
and an error correction decoding algorithm for $\C$.

For our constructions of the Linearized Polynomial Fuzzy Vault (LPFV), it is convenient to consider a Gabidulin code as a code whose codewords  consist of  evaluations of a linearized
polynomial $f\in \mathcal G_{\ell,s}$ over any set of $n$ linearly independent elements in $\Fn$. Concretely, we think of
a codeword as a  set of pairs $\left\{(g_i, y_i)\right\}_{i=1}^n$,  where $g_i \in \Fn$, are linearly independent over $\Fq$, and $y_i = f(g_i)$, for a linearized polynomial $f\in \mathcal G_{\ell,s}$.
In this framework, suppose that a witness attempts to gain access to the key, and submits a set  of features $W\subset \Fn$.


Given Assumption \ref{as:lin}, if $Z\subseteq \mathcal{V}$ is the subset of vault points $(x,y)$ with $x\in W$, we can consider the $\Fq$-linear map 
$$ L_Z:\Fn\longrightarrow\Fn$$
such that $L_Z(x)=y$ for all $(x,y)\in Z$. Now, think of the received word $c'$ as consisting of the set of pairs  $\left\{(g_i, L_Z(g_i))\right\}_{i=1}^n$, for $g_i \in A$. The secret codeword of the LPFV scheme is instead $c$, given by the set of pairs $\left\{(g_i, \kappa(g_i))\right\}_{i=1}^n$. With this notation it is easy to see that
$$d_R(c,c')= \rk(\kappa-L_Z).$$

The following results relate the rank distance with the set difference, showing that the rank metric can be a good approximation of the set-difference metric. Let $d_{\Delta}(A,W):=|(A\backslash W) \cup (W\backslash A) |$ denote the set-difference between $A$ and $W$.
\begin{proposition}\label{pr:delta}
 In the setting of the LPFV scheme, suppose that the values  $\lambda(x)$, for $x \in B$ are chosen at random uniformly and independently in $\Fn \smallsetminus \left\{\kappa(x)\right\}$. Then 
\begin{enumerate}
 \item $2d_R(c,c') \leq d_{\Delta}(A,W)$.
\item Let $0\leq u\leq n$ be an integer. Then
$$\mathrm{Pr}\left\{2d_R(c,c') = d_{\Delta}(A,W) \mid |A\cap W|=u\right\}= \prod_{i=0}^{n-u-1}\frac{(q^n-q^i)}{(q^{n}-1)}=1+O(q^{-u-1}).$$
\end{enumerate}
\end{proposition}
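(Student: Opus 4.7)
The plan is to work exclusively with the $\Fq$-linear map $h:=\kappa-L_Z \in \mathcal L_n(\Fn)$. Under Assumption~\ref{as:lin}, $A=\{g_1,\ldots,g_n\}$ is an $\Fq$-basis of $\Fn$, and the isomorphism $\Fn^n\cong \Fq^{n\times n}$ identifies the rank of $c-c'$ with the $\Fq$-dimension of $\mathrm{span}_{\Fq}(h(g_1),\ldots,h(g_n))$. Since $h$ is $\Fq$-linear and $g_1,\ldots,g_n$ form a basis, this span equals the image of $h$, so
$$d_R(c,c')=\rk_{\Fq}(h)=n-\dim_{\Fq}\ker h.$$
The two parts then reduce to controlling $\ker h$ on the authentic part $A\cap W$ and the chaff part $W\setminus A$ of $W$.

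For~(1), if $x\in A\cap W$ the vault associates to $x$ the authentic point $(x,\kappa(x))$, so $L_Z(x)=\kappa(x)$ and $h(x)=0$. Because $A\cap W$ sits inside the $\Fq$-basis $A$, its elements are $\Fq$-linearly independent, giving $\dim_{\Fq}\ker h\ge|A\cap W|=u$ and hence $\rk h\le n-u$. Since $|A|=|W|=n$ implies $d_\Delta(A,W)=2(n-u)$, part~(1) follows.

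For~(2), equality $2d_R(c,c')=d_\Delta(A,W)$ amounts to $\rk h=n-u$, i.e.\ $\ker h=V:=\mathrm{span}_{\Fq}(A\cap W)$. Enumerate $W\setminus A=\{w_{u+1},\ldots,w_n\}$. Since $W$ is an $\Fq$-basis and $A\cap W$ spans $V$, the cosets of the $w_j$'s form a basis of $\Fn/V$. Therefore $\ker h=V$ if and only if $h(w_{u+1}),\ldots,h(w_n)$ are $\Fq$-linearly independent in $\Fn$. For each $w_j\in W\setminus A\subseteq B$ we have $h(w_j)=\kappa(w_j)-\lambda(w_j)$, and by hypothesis $\lambda(w_j)$ is uniform on $\Fn\setminus\{\kappa(w_j)\}$ and the choices are independent across $j$, so the $h(w_j)$ are i.i.d.\ uniform on $\Fn\setminus\{0\}$.

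It remains to compute the probability that $n-u$ independent uniformly random nonzero vectors in $\Fn$ are $\Fq$-linearly independent: picking them sequentially, the $(i+1)$-st must avoid the $i$-dimensional $\Fq$-span of the previous ones, which happens with probability $(q^n-q^i)/(q^n-1)$, yielding the stated product. Writing each factor as $1-(q^i-1)/(q^n-1)$ shows that the dominant deviation from $1$ comes from $i=n-u-1$ and is of order $q^{-u-1}$, giving the asymptotic $1+O(q^{-u-1})$. The main subtlety is the reduction in~(2) from $\rk h=n-u$ to linear independence of the chaff contributions; once the quotient by $V$ argument is in place the probability computation is entirely routine.
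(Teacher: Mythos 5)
Your proof is correct and takes essentially the same route as the paper's: part (1) via the vanishing of $\kappa-L_Z$ on $A\cap W$ inside the basis $A$, and part (2) by reducing the equality $2d_R(c,c')=d_{\Delta}(A,W)$ to the $\Fq$-linear independence of the i.i.d.\ uniform nonzero values $(\kappa-L_Z)(w)$ for $w\in W\smallsetminus A$, followed by the standard count $\prod_{i=0}^{n-u-1}(q^n-q^i)/(q^n-1)$. You are merely more explicit than the paper in justifying $d_R(c,c')=\rk(\kappa-L_Z)$ and the quotient-by-$\mathrm{span}_{\Fq}(A\cap W)$ equivalence, which the paper leaves implicit.
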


\begin{proof}
\begin{enumerate}
\item Let $W$ be the set of features submitted, and let $u=|A\cap W|$. Then we have $d_{\Delta}(A,W)=2n-2|A\cap W|=2n-2u$. Consider now the $\Fq$-linear map $L_Z:\Fn\longrightarrow\Fn$ such that $L_Z(x)=y$ for $(x,y)\in Z$. The set of first coordinates of  $Z$ is an $\Fq$-basis of $\Fn$ and the linear map $\kappa-L_Z$ is $0$ on $A\cap W$. Therefore 
$$  d_R(c,c')=\rk(\kappa-L_Z)\leq n-u=\frac{d_{\Delta}(A,W)}{2}.$$
\item  Since the  $\lambda(x)$, for $x \in B$, are chosen at random uniformly and independently in $\Fn \smallsetminus \left\{\kappa(x)\right\}$, then the values $(L_Z-\kappa)(x)$, for $x\in W\smallsetminus (A\cap W)$ are chosen at random uniformly and independently in $\Fn\smallsetminus\{0\}$. Furthermore, the condition $2d_R(c,c') = d_{\Delta}(A,W)$ is equivalent to the condition that the values $(L_z-\kappa)(x)$, for $x\in W\smallsetminus (A\cap W)$ are linearly independent. Hence,
$$\mathrm{Pr}\left\{2d_R(c,c') = d_{\Delta}(A,W) \mid |A\cap W|=u\right\}= \frac{\left| \left\{A\in \Fq^{n\times (n-u)}\mid \rk(A)=n-u \right\}\right|}{(q^{n}-1)^{(n-u)}}.$$
\end{enumerate}

\end{proof}


\begin{theorem}\label{th:Set}
Under the same hypothesis of Proposition \ref{pr:delta}, the following statements hold.
\begin{enumerate}
 \item
 If  $d_{\Delta}(A,W)\leq 2\left\lfloor\frac{n-\ell}{2}\right\rfloor$, then the vault recovers the key $\kappa(x)$.
\item $$\mathrm{Pr}\left\{2d_R(c,c') = d_{\Delta}(A,W)\right\}=1+O(q^{-1}).$$
\end{enumerate}
\end{theorem}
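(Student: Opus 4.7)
The plan is to derive both parts directly from Proposition~\ref{pr:delta} with essentially no new ideas.

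For part (1), I translate the set-difference hypothesis into a rank-distance bound. By Proposition~\ref{pr:delta}(1),
\[ d_R(c,c') \le \tfrac{1}{2} d_\Delta(A,W) \le \left\lfloor \frac{n-\ell}{2} \right\rfloor. \]
Since $\C = \mathcal G_{\ell,s}(g_1,\ldots,g_n)$ is an $[n,\ell,n-\ell+1]$ generalized Gabidulin code, its rank error-correcting capability equals $t = \lfloor (n-\ell)/2 \rfloor$, so any of the polynomial-time decoding algorithms cited in Section~\ref{sec:3} recovers $c$ from $c'$. Because the $g_i$ are $\Fq$-linearly independent, the evaluation map $f \mapsto (f(g_1),\ldots,f(g_n))$ is a bijection from $\mathcal G_{\ell,s}$ onto $\C$, so the secret linearized polynomial $\kappa$ is uniquely determined by $c$ and is therefore recovered.

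For part (2), the only randomness in the event under consideration is the chaff assignment $\lambda$ on $B$, while $A$, $\kappa$, and the witness input $W$ are fixed; setting $u := |A\cap W|$, the unconditional probability simply equals the conditional one given in Proposition~\ref{pr:delta}(2),
\[ P_u := \prod_{i=0}^{n-u-1}\frac{q^n - q^i}{q^n - 1}. \]
The task then is to show $P_u = 1 + O(q^{-1})$ \emph{uniformly in} $u \in \{0,1,\ldots,n\}$. Applying the elementary inequality $\prod_j(1-x_j) \ge 1 - \sum_j x_j$ valid for $x_j \in [0,1]$,
\[ 1 - P_u \;\le\; \sum_{i=0}^{n-u-1}\frac{q^i-1}{q^n-1} \;\le\; \frac{q^{n-u}}{(q-1)(q^n - 1)} \;=\; O(q^{-u-1}) \;\le\; O(q^{-1}), \]
with the worst case occurring at $u = 0$, which is exactly what produces the advertised $O(q^{-1})$ error.

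I do not expect any genuine obstacle: part (1) combines Proposition~\ref{pr:delta}(1) with the standard decoding guarantee for Gabidulin codes, and part (2) reduces to the elementary geometric-series estimate above. The only point that requires a little care is checking that the $O(q^{-u-1})$ term from Proposition~\ref{pr:delta}(2) is uniform in $u$, which the chain of inequalities above makes explicit.
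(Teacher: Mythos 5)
Your proof is correct and follows essentially the same route as the paper: part (1) is the identical appeal to Proposition~\ref{pr:delta}(1) together with the decoding radius $t=\lfloor(n-\ell)/2\rfloor$ of the Gabidulin code, and part (2) rests on the same uniform bound $1-P_u=O(q^{-u-1})\le O(q^{-1})$. The only cosmetic difference is that the paper averages over $u$ via the law of total probability (treating $|A\cap W|$ as random) rather than fixing $W$, and your explicit geometric-series estimate makes precise the uniformity in $u$ that the paper's proof uses implicitly when it replaces $O(q^{-u-1})$ by $O(q^{-1})$ inside the sum.
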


\begin{proof}
\begin{enumerate}
\item  By Proposition \ref{pr:delta} we have  $2d_R(c,c') \leq d_{\Delta}(A,W)\leq  2\left\lfloor\frac{n-\ell}{2}\right\rfloor$. Therefore we are within the error-correction capability and we can correctly obtain the codeword $c$, and hence the key $\kappa(x)$.
\item We can write $\mathrm{Pr}\left\{2d_R(c,c') = d_{\Delta}(A,W)\right\}$ as
\begin{align*}
& \sum_{u=0}^n\mathrm{Pr}\left\{2d_R(c,c') = d_{\Delta}(A,W) \mid |A\cap W|=u\right\}\mathrm{Pr}\left\{|A\cap W|=u\right\} \\
                                         =& \sum_{u=0}^n \left( 1+O(q^{-u-1})\right)\mathrm{Pr}\left\{|A\cap W|=u\right\} \\
                                         = & \sum_{u=0}^n \left( 1+O(q^{-1})\right)\mathrm{Pr}\left\{|A\cap W|=u\right\} \\
                                         = & \left( 1+O(q^{-1})\right) \sum_{u=0}^n \mathrm{Pr}\left\{|A\cap W|=u\right\} \\
                                        = & 1+O(q^{-1}).
\end{align*}
\end{enumerate}
\end{proof}
{\em Remark.} Probabilistic results in Proposition \ref{pr:delta} and Theorem \ref{th:Set} do not depend on the probability distribution of the choice of the features. We are only assuming that our construction of the Linearized Polynomial Fuzzy Vault is made by choosing at random uniformly and independently the values  $\lambda(x)$ for $x\in B$.

\subsection{Generalization of the LPFV Scheme}

In our construction of the LPFV we  considered Gabidulin codes of length $n$ over $\Fn$. The motivation is that given a set of features $W$ satisfying Assumption \ref{as:lin}, the map $L_Z:\Fn \rightarrow \Fn$ is uniquely determined, and hence also the received word $c'$. 

We can generalize our LPFV  considering Gabidulin codes of length $n$ over  the field $\Fm$, where $n<m$, but we need to define the map $L_Z$ in a suitable way. 

Before explaining how to construct $L_Z$,  we can observe that an analogue of Lemma \ref{lem:lin} holds and it can be proved in the same way, but in this case the probability that 
the set of features is made of linearly independent elements is equal to

$$ \prod_{i=0}^{n-1}\frac{(q^m-q^i)}{(q^m-i)}=1+O(q^{-1-m+n}).$$

Now, let $\mathcal W$ and $\mathcal A$ be the $\Fq$-subspaces of $\Fm$ spanned respectively by $W$ and $A$.  First, we can observe that, in order to build the received word $c'$ as the set $\left\{(g_i, L_Z(g_i))\right\}_{i=1}^n$, we only need to define map $L_Z$ on $\mathcal A$. We propose the following construction. 

We first define the application $L_Z$ on $W$ as $L_Z(x)=y$ for all $(x,y)\in Z$. Then complete $W$ to a basis $B$ of $\mathcal A+\mathcal W$ , by adding the elements $g_i$ in increasing order with respect to the indices $i$. For those $g_i$, we set $L_Z(g_i)=\kappa(g_i)+\alpha^{q^i}$, where $\alpha \in \Fm$ and $\left\{\alpha^{q^i}\right\}_{i=0}^{m-1}$ is a normal basis of $\Fm$ as an $\Fq$-vector space.

In this way, our map is uniquely determined on $\mathcal A+\mathcal W$, and in particular on $\mathcal A$. Let again  $c$ be the codeword given by the set of pairs $\left\{(g_i, \kappa(g_i))\right\}_{i=1}^n$.  With this notation it is easy to see that
$$d_R(c,c')=\rk(\kappa-L_Z)_{|_{\mathcal A}}\leq \rk(\kappa-L_Z).$$
The following results are the analogues of Proposition \ref{pr:delta} and Theorem \ref{th:Set}, and they relate the rank distance of $c$ and $c'$ with the set difference of $A$ and $W$.

\begin{proposition}\label{pr:deltam}
 In the setting of the generalized LPFV scheme, suppose that the values  $\lambda(x)$, for $x \in B$ are chosen at random uniformly and independently in $\Fm \smallsetminus \left\{\kappa(x)\right\}$.  
\begin{enumerate}
\item Let the subspace distance be $d_S(\mathcal A,\mathcal W):=\dim_{\Fq}(\mathcal A)+\dim_{\Fq}(\mathcal W)-2\dim_{\Fq}(\mathcal A\cap \mathcal W)$. Then
 $$d_S(\mathcal A,\mathcal W) \leq 2d_R(c,c') \leq d_S(\mathcal A,\mathcal W)+2\rk(\kappa-L_Z)_{|_{\mathcal A\cap \mathcal W}}\leq d_{\Delta}(A,W).$$
\item Let $0\leq u \leq v \leq n$ be two integers. Then
 $$\pr\left\{2d_R(c,c') = d_{\Delta}(A,W) \mid |A\cap W|=u, \dim(\mathcal A \cap \mathcal W)=v \right\}=\prod_{i=n-v}^{n-u-1}\frac{(q^m-q^{i})}{q^m-1}.$$
\end{enumerate}
\end{proposition}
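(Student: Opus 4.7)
The plan is to analyze $(\kappa - L_Z)_{|_{\mathcal A}}$ by decomposing $\mathcal A = \mathcal A' \oplus U$, where $U := \mathcal A \cap \mathcal W$ has dimension $v$ and $\mathcal A'$ is the $(n-v)$-dimensional subspace spanned by those elements of $A$ which are added when completing $W$ to a basis $B$ of $\mathcal A + \mathcal W$. On $\mathcal A'$ the construction gives $(\kappa - L_Z)(g_i) = -\alpha^{q^i}$ for each added $g_i$, so the image is an $(n-v)$-dimensional subspace $N \subseteq \Fm$ (being the span of $n-v$ distinct elements of a normal basis). On $U$ the map vanishes on $\mathrm{span}(A \cap W)$, since $L_Z$ and $\kappa$ agree on $A \cap W$; hence $\rk((\kappa - L_Z)_{|_U}) \leq v - u$. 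Part 1 then follows at once from $\rk((\kappa - L_Z)_{|_{\mathcal A}}) \geq \rk((\kappa - L_Z)_{|_{\mathcal A'}}) = n-v$ on one side and subadditivity of rank along the direct sum on the other, using $d_S(\mathcal A, \mathcal W) = 2(n-v)$ and $d_{\Delta}(A, W) = 2(n-u)$.

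For Part 2, equality $2 d_R(c,c') = d_{\Delta}(A,W)$ forces equality throughout Part 1, which is in turn equivalent to the joint condition that $\rk((\kappa - L_Z)_{|_U}) = v - u$ and that $\mathrm{im}((\kappa - L_Z)_{|_U}) \cap N = \{0\}$. Both conditions together amount to the injectivity of the induced $\Fq$-linear map $\bar\phi : U / \mathrm{span}(A \cap W) \longrightarrow \Fm / N$. Using the basis $W$ of $\mathcal W$, I would identify $U / \mathrm{span}(A \cap W)$ with a $(v-u)$-dimensional subspace $V \subseteq \Fq^{W \setminus A}$, so that $\bar\phi$ becomes the concrete map $(c_w) \mapsto \sum_{w \in W \setminus A} c_w \beta_w \bmod N$, where $\beta_w := \kappa(w) - \lambda(w)$ are i.i.d.\ uniform in $\Fm \setminus \{0\}$.

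To compute the probability that this random map is injective, I would fix a basis $v^{(1)}, \ldots, v^{(v-u)}$ of $V$ and build the images $\psi(v^{(j)}) := \sum_w v^{(j)}_w \beta_w$ sequentially. Assuming the first $j - 1$ images remain independent modulo $N$, the forbidden subspace $N + \mathrm{span}(\psi(v^{(1)}), \ldots, \psi(v^{(j-1)}))$ has dimension $n - v + j - 1$; treating $\psi(v^{(j)})$ as a uniform draw from $\Fm \setminus \{0\}$ yields the conditional success probability $(q^m - q^{n-v+j-1})/(q^m - 1)$, and multiplying over $j = 1, \ldots, v - u$ gives exactly the product asserted in the statement. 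The main obstacle is rigorously justifying this uniform-draw step: each $\psi(v^{(j)})$ is an $\Fq$-linear combination of i.i.d.\ $\beta_w$'s and is therefore not literally uniform on $\Fm \setminus \{0\}$. I expect the right way around this is an adaptive choice of basis of $V$ that introduces exactly one previously unused coordinate $\beta_w$ at each step, reducing the calculation to the counting of full-rank nonzero-column configurations analogous to the computation behind Proposition~\ref{pr:delta}.
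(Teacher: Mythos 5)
Your Part 1 is correct and is essentially the paper's own argument: the same decomposition $\mathcal A=(\mathcal A\cap\mathcal W)\oplus\widehat{\mathcal A}$ with $\widehat{\mathcal A}$ spanned by the adjoined $g_i$'s, the same use of the normal basis to get $\rk(\kappa-L_Z)_{|_{\widehat{\mathcal A}}}=n-v=\frac{1}{2}d_S(\mathcal A,\mathcal W)$, and the same combination of monotonicity and subadditivity of the rank together with the vanishing of $\kappa-L_Z$ on $A\cap W$.

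In Part 2 your reduction of the equality $2d_R(c,c')=d_\Delta(A,W)$ to the injectivity of the induced map $U/\mathrm{span}(A\cap W)\to\Fm/N$ is correct, and is in fact more careful than what the paper writes; but the step you yourself flag as the main obstacle is a genuine gap, and the fix you propose cannot close it. An echelon basis of $V$ gives each $v^{(j)}$ a private pivot coordinate, but it cannot prevent $v^{(j)}$ from having support on several coordinates of $W\smallsetminus A$: this happens exactly when $\mathcal A\cap\mathcal W$ is not spanned by elements of $W$ itself, i.e.\ when $|W\cap\mathcal A|<v$. In that case $\psi(v^{(j)})$ is an $\Fq$-combination of two or more i.i.d.\ uniform elements of $\Fm\smallsetminus\{0\}$, which is not uniform on $\Fm\smallsetminus\{0\}$, and the product formula genuinely fails. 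For instance, with $q=2$, $u=0$, $v=1$, $A=\{a_1,a_2\}$ and $W=\{a_1+\gamma,a_2+\gamma\}$ for $\gamma\notin\mathcal A$, one gets $V=\langle(1,1)\rangle$ and the relevant event is $\beta_1+\beta_2\notin N$ with $\dim_{\Fq}N=n-v$; the number of pairs of nonzero $\beta_1,\beta_2$ with $\beta_1+\beta_2\in N$ is $(q^m-1)q^{n-v}-(q^{n-v}-1)$, which exceeds $(q^{n-v}-1)(q^m-1)$ by $q^m-q^{n-v}$, so the success probability is strictly smaller than $\frac{q^m-q^{n-v}}{q^m-1}$. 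Thus no choice of basis rescues the computation there; the stated formula requires the additional hypothesis that $W$ contains exactly $v-u$ elements of $\mathcal A\smallsetminus A$, so that together with $A\cap W$ they span $\mathcal A\cap\mathcal W$ and $V$ is a coordinate subspace. The paper's proof assumes exactly this by writing $W=\{u_1,\ldots,u_{n-v},w_{n-v+1},\ldots,w_{n-u},g_{j_1},\ldots,g_{j_u}\}$ with $w_i\in\mathcal A\smallsetminus A$; the images are then the $\beta_{w_i}$ themselves, i.i.d.\ uniform on $\Fm\smallsetminus\{0\}$, and the probability is the ratio of full-rank completions of $X$ over $(q^m-1)^{v-u}$, giving the asserted product. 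If you add that hypothesis explicitly, your sequential computation of the conditional probabilities $\frac{q^m-q^{n-v+j-1}}{q^m-1}$ is exactly right and matches the paper's count.
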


\begin{proof}
\begin{enumerate}
\item Following the construction of the map $L_Z$, we can write the subspace $\mathcal A$ as the direct sum of $\mathcal A \cap \mathcal W$ and the subspace $\widehat{\mathcal A}$, where $\widehat{\mathcal A}=\langle g_i \mid i \in I\rangle$ and $I\subset \{1,\ldots, n\}$ with $|I|=n-\dim_{\Fq}(\mathcal A \cap \mathcal W)$. Therefore we can write
\begin{equation}\label{eq:rk}
\rk(\kappa-L_Z)_{|_{\widehat{\mathcal A}}} \leq \rk(\kappa-L_Z)_{|_{\mathcal A}} \leq \rk(\kappa-L_Z)_{|_{\widehat{\mathcal A}}}+\rk(\kappa-L_Z)_{|_{\mathcal A\cap \mathcal W}}.
\end{equation}
Let $r=\dim_{\Fq}(\widehat{\mathcal A})$. By definition of the $L_Z$, we have  
$$ \rk(\kappa-L_Z)_{|_{\widehat{\mathcal A}}}=\rk(\alpha^{q^{i_1}},\ldots, \alpha^{q^{i_r}}).$$
By construction $\{\alpha^{q^i}\}_{i=0}^{m-1}$ is a normal basis of $\Fm$ over $\Fq$, and hence we can conclude that
$$ \rk(\kappa-L_Z)_{|_{\widehat{\mathcal A}}}=r=\dim_{\Fq}(\widehat{\mathcal A})=n-\dim_{\Fq}(\mathcal A\cap \mathcal W)=\frac{d_S(\mathcal A, \mathcal W)}{2}.$$
Substituting this equation in (\ref{eq:rk}) we obtain the first two inequalities.

For the last inequality we notice that the map $(\kappa-L_Z)_{|_{\mathcal A\cap \mathcal W}}$ is 0 on $|A\cap W|$, and therefore
$$\rk(\kappa-L_Z)_{|_{\mathcal A\cap \mathcal W}}\leq \dim_{\Fq}(\mathcal A\cap \mathcal W)-|A\cap W|.$$ 
Hence we can conclude that
$$ d_S(\mathcal A,\mathcal W)+2\rk(\kappa-L_Z)_{|_{\mathcal A\cap \mathcal W}}\leq 2n-2|A\cap W|=d_{\Delta}(A,W).$$
\item Let $u=|A\cap W|, v=\dim(\mathcal A \cap \mathcal W)$. Then we can write 
$$W=\left\{u_1,\ldots,u_{n-v},w_{n-v+1},\ldots,w_{n-u},g_{j_1},\ldots, g_{j_u} \right\},$$ 
where $u_i\notin \mathcal A$ for $i=1,\ldots,n-v$ and $w_i\in \mathcal A\smallsetminus A$ for $ i=n-v+1,\ldots, n-u$.
Therefore $2\rk(\kappa-L_Z)_{|_{\widehat{\mathcal A}}}=2n-2v$, and the condition 
$$\rk(\kappa-L_Z)_{|_{\mathcal A}}=\rk(\kappa-L_Z)_{|_{\widehat{\mathcal A}}}+\rk(\kappa-L_Z)_{|_{\mathcal A\cap \mathcal W}}=n-u$$
is equivalent to the condition 
$$\rk(\alpha^{q^{i_1}},\ldots, \alpha^{q^{i_{n-v}}}, (\kappa-L_Z)(w_{n-v+1}),\ldots,(\kappa-L_Z)(w_{n-u}))=n-u.$$
By hypothesis the values $(L_Z-\kappa)(w_i)$, for $i=n-v+1,\ldots, n-u$ are chosen at random uniformly and independently in $\Fm\smallsetminus\{0\}$, and we can conclude that the probability we are looking for is equal to
$$ \frac{\left| \left\{A\in \Fq^{m\times (v-u)} \mid \rk(A\mid X)=n-u\right\}\right|}{(q^m-1)^{(v-u)}}, $$
where $X$ is the matrix representation over $\Fq$ of the vector $(\alpha^{q^{i_1}},\ldots, \alpha^{q^{i_{n-v}}})$. Since 
$$ \left| \left\{A\in \Fq^{m\times (v-u)} \mid \rk(A\mid X)=n-u\right\}\right|=\prod_{i=n-v}^{n-u-1}(q^m-q^i) ,$$
this concludes the proof.
\end{enumerate}
\end{proof}

 \begin{theorem}\label{th:Setm}
Under the same hypothesis of Proposition \ref{pr:deltam}, the following statements hold.
\begin{enumerate}
 \item
 If  $d_{\Delta}(A,W)\leq 2\left\lfloor\frac{n-\ell}{2}\right\rfloor$, then the vault recovers the key $\kappa(x)$.
\item $$\mathrm{Pr}\left\{2d_R(c,c') = d_{\Delta}(A,W)\right\}=1+O(q^{-1-m+n}).$$
\end{enumerate}
\end{theorem}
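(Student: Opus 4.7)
The plan is to mirror the structure of the proof of Theorem \ref{th:Set}, substituting Proposition \ref{pr:deltam} for Proposition \ref{pr:delta} and carefully tracking the finer asymptotics, which now depend jointly on $u = |A\cap W|$ and $v = \dim_{\Fq}(\mathcal A\cap \mathcal W)$ rather than on $u$ alone.

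For part (1), I will chain the first inequality of Proposition \ref{pr:deltam}(1) with the hypothesis to obtain
\[
2d_R(c,c') \leq d_{\Delta}(A,W) \leq 2\left\lfloor \tfrac{n-\ell}{2}\right\rfloor,
\]
so the error falls within the decoding radius $\lfloor (n-\ell)/2\rfloor$ of $\mathcal G_{\ell,s}(g_1,\ldots,g_n)$. Any efficient Gabidulin decoder then recovers $c$, from which $\kappa(x)$ is read off. This step is a direct transcription of part (1) of Theorem \ref{th:Set}; the subspace term $2\rk(\kappa-L_Z)_{|_{\mathcal A\cap \mathcal W}}$ appearing in Proposition \ref{pr:deltam}(1) does no harm because the outer bound on the full expression is still $d_{\Delta}(A,W)$.

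For part (2), I will condition jointly on $u$ and $v$ via the law of total probability and apply Proposition \ref{pr:deltam}(2), which evaluates the conditional probability of equality as $\prod_{i=n-v}^{n-u-1}(q^m - q^i)/(q^m - 1)$. Each factor satisfies
\[
\frac{q^m - q^i}{q^m - 1} \;=\; 1 - \frac{q^i - 1}{q^m - 1} \;=\; 1 + O(q^{i-m}),
\]
and since $i \leq n-u-1 \leq n-1$, every factor, and hence the full product of at most $n$ factors, is $1 + O(q^{n-1-m}) = 1 + O(q^{-1-m+n})$. Pulling this uniform error term out of the double sum and using that the joint distribution of $(u,v)$ sums to $1$ then gives the stated estimate.

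The main bookkeeping obstacle is confirming the uniformity of the $O$-constant in $(u,v)$: one must observe that the product has at most $n$ factors and that the exponent $i$ never exceeds $n-1$, so the hypothesis $n < m$ is precisely what drives the bound $q^{-1-m+n}$ to $0$. Once this is in place, the computation is a straightforward generalization of the sum carried out in the proof of Theorem \ref{th:Set}.
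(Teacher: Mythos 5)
Your proposal is correct and follows essentially the same route as the paper: part (1) is the same transcription of Theorem \ref{th:Set}.1 using the outer bound $2d_R(c,c')\leq d_{\Delta}(A,W)$ from Proposition \ref{pr:deltam}.1, and part (2) is the same conditioning on the joint events $\{|A\cap W|=u\}\cap\{\dim_{\Fq}(\mathcal A\cap\mathcal W)=v\}$ followed by the uniform estimate $1+O(q^{-1-m+n})$ of the conditional probability and summation over the joint distribution. The only nitpick is that you call the needed bound ``the first inequality'' of Proposition \ref{pr:deltam}.1 (which is actually $d_S(\mathcal A,\mathcal W)\leq 2d_R(c,c')$), but your subsequent remark makes clear you are using the correct outermost chain, and your factor-by-factor justification of the $O$-term is if anything slightly more explicit than the paper's.
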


\begin{proof}
\begin{enumerate}
\item The proof is essentially the same as the proof of Theorem \ref{th:Set}.1, using Proposition \ref{pr:deltam}.1.
\item In order to simplify the notation we introduce the events $D_u=\left\{ |A\cap W|=u\right\}$, $E_v=\left\{ \dim_{\Fq}(\mathcal A \cap \mathcal W)=v \right\}$ for $0\leq u, v \leq n$, and $X=\left\{2d_R(c,c') = d_{\Delta}(A,W)\right\}$. Then we have
\begin{align*}
\mathrm{Pr}\left\{X\right\}&= \sum_{0\leq u\leq v\leq n}\mathrm{Pr}\left\{X \mid D_u \cap E_v \right\}\mathrm{Pr}\left\{D_u \cap E_v\right\} \\
 &= \sum_{0\leq u\leq v\leq n}\left( 1+O(q^{-1-m-u+n})\right)\mathrm{Pr}\left\{D_u \cap E_v\right\} \\
                                          &= \sum_{0\leq u\leq v\leq n}\left( 1+O(q^{-1-m+n})\right)\mathrm{Pr}\left\{D_u \cap E_v\right\} \\
                                         &=\left( 1+O(q^{-1-m+n})\right) \sum_{0\leq u\leq v\leq n} \mathrm{Pr}\left\{D_u \cap E_v\right\} \\
                                        &=1+O(q^{-1-m+n}).
\end{align*}
\end{enumerate}
\end{proof}
{\em Remark.} Suppose one wants to use a generalized LPFV scheme with $n$ genuine features, and suppose moreover that a field $\Fq$ and an extension field $\Fm$, with $m\geq n$, are given. By Theorem \ref{th:Setm}.2 we can see that the bigger is $m$ the better is the approximation of the set difference with the rank distance. On the other hand, increasing $m$  implies an increase of the computational cost of the operations. Then one can  choose the best $m$ based on the application and the particular requirements of the context. 

\section{Applications}
\label{sec:5}

The schemes presented above can be applied in several scenarios for different purposes.
In this section we would like to give just a few examples.

One scenario for the fuzzy commitment scheme in the rank metric is the following. Suppose $B$ is the matrix used to create the stored tuple and imagine it as an image. It may happen for some reason that $B$ gets somehow damaged in a way that a few rows (or columns) are erased or anyway not the same as before. One can then authenticate with the new matrix $B'$ as long as not too many rows (or columns) are different. In another situations the matrix $B$ may be slightly changed into $B'$ by having all elements increased by a common error, and again the difference between the two matrices is a matrix of low rank, exactly $1$ in this case.

Another scenario involves a multi-factor authentication problem. Suppose that in order to perform authentication one needs a large number of conditions fulfilled, namely imagine a matrix with a large number of columns whereby condition number $i$ is fulfilled whenever column number $i$ equals a predetermined vector $v_i$. If you want to allow authentication as long as a certain big enough number of conditions are satisfied, then the fuzzy commitment scheme in the rank metric can be used.  Indeed having two matrices $A$ and $A'$ that both satisfy a certain condition corresponds to a zero column in the difference $A-A'$ which directly affects the rank distance between the two.

Applications for the linearized polynomial fuzzy vault scheme overlap with those of the standard fuzzy vault, i.e. we are considering authentication based on the set difference metric. It may be preferable to use the linearized version and decoding in the rank metric for certain choices and combinations of parameters which are usually dependent on the application. Also, the use of linear maps may be preferred for certain implementations. 

\section{Acknowledgments}
The authors were supported by Swiss National Science Foundation grant n.169510.

\bibliographystyle{abbrv}
\bibliography{biblio2}


\end{document}